\documentclass[pdflatex,sn-mathphys-num]{sn-jnl}

\usepackage{amsfonts,amssymb,mathtools}
\usepackage{array,booktabs,graphicx}
\usepackage{enumitem}
\hypersetup{
  pdftitle={Multi-View Block Distance Distributions and Linear Programming Bounds for Locally Recoverable Codes with Availability},
  pdfauthor={Ming-Hsuan Kang, Maosheng Xiong, Yu Hsuan Hsieh, and Po-Wei Lai},
  pdfkeywords={locally recoverable codes, availability, multi-view block distance distributions, four-block distance distributions, linear programming bounds, Krawtchouk polynomials}
}

\theoremstyle{thmstyleone}
\newtheorem{theorem}{Theorem}[section]
\newtheorem{proposition}[theorem]{Proposition}
\newtheorem{lemma}[theorem]{Lemma}

\theoremstyle{thmstylethree}
\newtheorem{definition}[theorem]{Definition}
\theoremstyle{thmstyletwo}
\newtheorem{remark}[theorem]{Remark}

\newcommand{\C}{\mathcal C}
\newcommand{\F}{\mathbb F}
\newcommand{\Kraw}{K}
\newcommand{\1}{\mathbf 1}
\newcommand{\Cond}{\mathrm{Cond}}
\newcommand{\FB}{\mathrm{4blk}}
\numberwithin{table}{section}

\begin{document}

\title[Multi-view block LP bounds for LRCs]
{Multi-View Block Distance Distributions and Linear Programming Bounds for
Locally Recoverable Codes with Availability}

\author{\fnm{Ming-Hsuan} \sur{Kang}}\nomail
\author{\fnm{Maosheng} \sur{Xiong}}\nomail
\author{\fnm{Yu Hsuan} \sur{Hsieh}}\nomail
\author{\fnm{Po-Wei} \sur{Lai}}\nomail

\abstract{We develop a multi-view linear-programming framework for locally
recoverable codes with arbitrary fixed availability \(a\). For any retained
order \(1\le s\le a\), the selected helper sets, the recovered coordinate,
and their complement form an \(s+2\)-part partition. Recording the Hamming
distance on all blocks preserves both compatibility among the selected repair
alternatives and their coupling with the remaining coordinates. The resulting
joint distribution satisfies centered counting identities, product-Krawtchouk
positivity, and collision inequalities from the local-distance condition; for
fixed \(s\), these constraints give a polynomial-size relaxation for
arbitrary, possibly nonlinear, codes over any finite field. Retaining one
view recovers the three-block model of our companion paper. We develop the
first genuinely multi-view case, \(s=2\), in detail and specialize the exact
computations to availability \(a=2\). The resulting four-block LP projects
to both the one-view three-block model and a globally conditioned two-view
relaxation, making explicit the information lost by each coarsening. Exact
rational primal--dual certificates together with checked constructions prove
\(M_{\max}(2,8,4,2,2,2)=8\),
\(M_{\max}(3,7,3,2,2,2)=27\), and
\(M_{\max}(4,7,3,2,2,2)=64\); in all three cases the four-block bound is
strictly stronger than both coarsenings.}

\keywords{locally recoverable codes, availability, multi-view block distance
distributions, four-block distance distributions, linear programming bounds,
Krawtchouk polynomials, Delsarte method}

\pacs[MSC Classification]{94B05, 90C05, 05E30}

\maketitle

\section{Introduction}
\label{sec:introduction}

Locally recoverable codes (LRCs) reduce repair cost by reconstructing an
erased coordinate from a small local view \cite{gopalan2012,prakash2012}.
In the classical availability model, a coordinate has \((r,t)\)-availability
if it can be reconstructed from \(t\) pairwise disjoint helper sets, each
of size at most \(r\) \cite{rawat2016,tamobargfrolov2016}. This model has
led to rate and distance bounds, shortening and recovery-graph arguments,
parity-check and generalized-weight methods, and many explicit constructions
\cite{wangzhangliu2015,balajikumar2017,kruglik2017,kruglik2019}.

The code class studied here combines availability with the local-distance
form of locality introduced in \cite{prakash2012}. Each selected view
\(T_{i,h}=\{i\}\mathbin{\dot\cup}S_{i,h}\) has length at most
\(r+\delta-1\) and the projected code \(\C|_{T_{i,h}}\) has minimum
distance at least \(\delta\). Hence every selected view is itself a local
code that corrects up to \(\delta-1\) erasures. This hypothesis implies
functional availability but is generally stronger even when \(\delta=2\):
standard availability only requires the distinguished coordinate to be
determined by its helpers, whereas local distance two requires the entire
local projection to correct one erasure. Multiple available views protected
by local codes of prescribed distance have also been studied, for example in
the information-symbol setting of Kadiyam and Das \cite{kadiyamdas2020}.

Our earlier work introduced the three-block distance distribution attached
to a single recovery view \cite{paperone2026}. The present paper places that
construction in a multi-view hierarchy. For a code with availability \(a\),
one may retain any ordered number \(s\) of disjoint repair views with
\(1\le s\le a\). The corresponding \(s+2\)-block statistic records the
Hamming distance on the \(s\) helper sets, the recovered coordinate, and
their complement. Taking \(s=a\) retains all available repair views, while
\(s=1\) is exactly the three-block model of \cite{paperone2026}. For fixed
\(s\), centered counting, coordinate-partition Fourier positivity, and local
collision arguments give a polynomial-size relaxation; before symmetry
reduction its local state space has size at most \(O(nR_0^{2s})\), where
\(R_0=\min\{n,r+\delta-1\}\).

We develop the first genuinely multi-view case, \(s=2\), in detail and
specialize the exact computations to availability \(a=2\). This case already
contains the two interactions that the method is designed to retain. Treating
two repair views independently forgets
whether their local distance patterns arise from the same ordered pair of
codewords. Recording only their union preserves that compatibility but
forgets how each local Fourier sector extends across the coordinates outside
the two views. The resulting four-block distribution retains both pieces of
information while keeping the notation and exact computation manageable.

Fix a recovered coordinate \(i\), choose two disjoint helper sets
\(S_{i,1}\) and \(S_{i,2}\), and let \(O_i\) be the complement of the
selected union. Then
\[
 S_{i,1},\qquad S_{i,2},\qquad \{i\},\qquad O_i
\]
partition \([n]\). For each ordered triple \((x,y,i)\), we record the
Hamming distance on each block. The resulting distribution has three
layers. First, centered counting identities recover the global distance
distribution. Second, the product-Hamming character transform is
nonnegative in all four spectral degrees; the positive degrees on \(O_i\)
record how the two local views extend globally. Third, local distance gives
collision inequalities for each local view and for the punctured helper union.
Together these constraints define the four-block LP.

The comparison with coarser bounds is built into the same object. Merging
one helper block with \(O_i\) returns exactly to the one-view three-block LP
introduced in our earlier work \cite{paperone2026}. Keeping both helper
distances and the total distance,
but discarding positive spectral degrees on \(O_i\), gives a conditioned
two-view relaxation. These are projections of the four-block distribution,
not parallel theories. The comparison theorem therefore identifies which
information each coarsening discards.

The main contributions are as follows.
\begin{enumerate}[label=(\roman*)]
\item We formulate the multi-view hierarchy through a retained order
\(1\le s\le a\). The resulting \(s+2\)-block statistic contains the
three-block model of \cite{paperone2026} at \(s=1\) and remains
polynomial-size for every fixed \(s\).
\item We develop \(s=2\) in detail. For the resulting four-block
distribution we derive the centered identities, full product-Krawtchouk
positivity, and local collision inequalities, and prove objective-preserving
projections to both the one-view three-block LP and the conditioned two-view
relaxation. The formulation uses \(O(nr^4)\) local variables for fixed
\(\delta\).
\item Exact rational primal--dual certificates and independently checked
constructions determine three availability-two maxima over
\(\F_2,\F_3,\F_4\), and in each case the four-block optimum is strictly
smaller than both coarsened optima.
\end{enumerate}

This framework is complementary to existing availability bounds. Classical
availability work primarily controls rate, dimension, or minimum distance by
shortening, recovery graphs, parity-check incidence, or generalized Hamming
weights \cite{tamobargfrolov2016,balajikumar2017,kruglik2019}. Here the target is
an alphabet-dependent finite-length bound on
\(M_{\max}(q,n,d,r,\delta,a)\), valid for arbitrary, possibly nonlinear,
codes satisfying the stronger local-distance availability condition. The LP
retains the joint block-distance pattern of several repair alternatives for
the same ordered codeword pair and recovered coordinate, together with its
coupling to the coordinates outside the selected views.

Among locality-aware LP bounds, Agarwal et al. use a product association
scheme for a partition into uniform disjoint repair groups
\cite{agarwal2018}. The refined dual-distribution LP of Gruica, Jany, and
Ravagnani is support-sensitive and linear \cite{gruica2026}, while the
moment-based LP of Li, Wei, and Xiong uses higher local moments and applies
to general linear and nonlinear \(r,\delta\)-LRCs
\cite{liweixiong2026}. These methods encode locality through different state
spaces from the same-coordinate multi-view distribution studied here.
Maximally recoverable availability codes impose a prescribed topology and a
stronger erasure-correction requirement \cite{martinezpenas2026}; they form
a different code class.

Section~\ref{sec:availability-baseline} fixes the code class and the global
Delsarte layer. Section~\ref{sec:multiblock} develops the retained-order \(s=2\)
case, namely the four-block distribution, its positivity and collision
constraints, and the projection hierarchy. Section~\ref{sec:exact-bounds} gives the certified exact bounds
and the broader numerical landscape. Section~\ref{sec:limitations}
concludes with the scope of the relaxation.

\section{Code class and preliminaries}
\label{sec:availability-baseline}

Let \(q\) be a prime power and let \(\C\subseteq\F_q^n\) be a code of size
\(M\ge2\) and minimum Hamming distance at least \(d\). Linearity is not
assumed. Write \([m]=\{1,\ldots,m\}\), let \(\C|_E\) denote projection on
\(E\subseteq[n]\), and write \(d_E(x,y)\) for restricted Hamming distance.
Throughout, \(r\ge1\), \(\delta\ge2\), and \(a\ge1\). A projection with
fewer than two distinct words is assigned minimum distance zero.

\begin{definition}[all-symbol local-distance availability]
\label{def:availability}
The code \(\C\) has all-symbol local-distance
\((r,\delta,a)\)-availability if, for every \(i\in[n]\), one can select
ordered local views
\[
 T_{i,h}=\{i\}\mathbin{\dot\cup}S_{i,h},
 \qquad h=1,\ldots,a,
\]
such that
\[
 |T_{i,h}|\le r+\delta-1,\qquad
 d(\C|_{T_{i,h}})\ge\delta,
\]
and \(S_{i,1},\ldots,S_{i,a}\subseteq[n]\setminus\{i\}\) are pairwise
disjoint for fixed \(i\).
\end{definition}

No disjointness is required between recovery sets attached to different
recovered coordinates. The parameter \(\delta\) here is the minimum
distance of each local projection, following the \(r,\delta\)-locality
convention of \cite{prakash2012}. This should not be confused with the
notation \( (r,\delta)_c\) of Wang and Zhang \cite{wangzhang2014}, where
\(\delta-1\) counts alternative repair options rather than the distance of a
local code.

\begin{remark}[retained order and the general multi-view construction]
\label{rem:general-availability}
Fix \(1\le s\le a\) and retain the first \(s\) selected views. With
\[
 O_i^{(s)}=[n]\setminus
 \Bigl(\{i\}\mathbin{\dot\cup}S_{i,1}\mathbin{\dot\cup}\cdots
 \mathbin{\dot\cup}S_{i,s}\Bigr),
\]
the natural joint statistic is
\[
 \bigl(d_{S_{i,1}}(x,y),\ldots,d_{S_{i,s}}(x,y),
 \1_{\{x_i\ne y_i\}},d_{O_i^{(s)}}(x,y)\bigr).
\]
Its distribution is the \(s+2\)-block distance distribution. The centered
identities and product-Krawtchouk positivity follow blockwise exactly as
below. Each retained local view gives a complete-view collision inequality.
For a union of at least two helper sets, if the recovered coordinates agree,
any changed helper block already contributes at least \(\delta\) differences;
if they disagree, every selected helper set contributes at least
\(\delta-1\), so the punctured union has distance at least
\(2(\delta-1)\ge\delta\). Thus every such subset gives the corresponding
punctured helper-union inequality. Taking \(s=a\) retains all available
repair views, while \(s=1\) is the three-block model of
\cite{paperone2026}. We write out only \(s=2\) below; the exact extremal
computations further specialize to \(a=2\).
\end{remark}

For the detailed four-block model we now assume \(a\ge2\), fix one
admissible ordered selection, and retain its first two views. Put
\[
 t_h=|T_{i,h}|,\qquad
 R_0=\min\{n,r+\delta-1\},\qquad h=1,2.
\]
Then
\[
 \delta\le t_h\le R_0,\qquad t_1+t_2-1\le n.
\]
Let \(M_{\max}(q,n,d,r,\delta,a)\) denote the largest size of a code in
Definition~\ref{def:availability}. Since every code with \(a\ge2\) has two
selected views, the four-block
relaxation below is valid uniformly for all \(a\ge2\). We use it as the
complete retained-order \(s=2\) model; the exact extremal results in
Section~\ref{sec:exact-bounds} specialize to \(a=2\).

The local condition forces every nonzero codeword difference to have weight
at least \(\delta\): choose a coordinate \(i\) on which the pair differs and
use \(d(\C|_{T_{i,h}})\ge\delta\). We therefore use the effective threshold
\[
 d_\star=\max\{d,\delta\}.
\]

Define the normalized global distance distribution
\[
 A_\ell=\frac1M
 |\{(x,y)\in\C^2:d_H(x,y)=\ell\}|,
 \qquad 0\le\ell\le n.
\]
Thus
\[
 A_0=1,\qquad \sum_{\ell=0}^n A_\ell=M,\qquad
 A_\ell=0\quad(1\le\ell<d_\star).
\]
For \(0\le p,j\le N\), let
\[
 \Kraw_p(j;N,q)=
 \sum_{u=0}^p(-1)^u(q-1)^{p-u}
 \binom ju\binom{N-j}{p-u}
\]
be the \(q\)-ary Krawtchouk polynomial. The Delsarte character-energy
identity gives
\[
 B_w=\sum_{\ell=0}^n A_\ell\Kraw_w(\ell;n,q)\ge0,
 \qquad 0\le w\le n
\]
\cite{delsarte1973}. These mass, distance, and transform constraints form
the global layer of the four-block LP.

For comparison, let \(U_{\mathrm{3blk}}\) denote the optimum of the known
one-view three-block LP \cite{paperone2026}, which retains the partition
\(S_i\mathbin{\dot\cup}\{i\}\mathbin{\dot\cup}O_i\). We do not repeat
that formulation. The second comparison bound will arise directly as a
coarsening of the four-block distribution in Section~\ref{sec:multiblock}.

\section{The retained-order \texorpdfstring{$s=2$}{s=2} case: the four-block linear program}
\label{sec:multiblock}

The four-block construction is designed to keep two kinds of information at
once: compatibility between the two local views for the same ordered pair,
and the extension of that local pattern to the coordinates outside the
selected views. The LP is obtained in three steps. We first record the exact
four-block distances and their centered relation to the global distance
distribution. We then impose the full product-Krawtchouk spectrum, including
the outside spectral degree. Finally, local distance yields collision
inequalities on the two local views and their punctured helper union. The
resulting LP admits two natural projections, making the information
discarded by each comparison bound explicit.

Fix the first two selected local views at each recovered coordinate \(i\).
Write
\[
 O_i=[n]\setminus
 \bigl(S_{i,1}\mathbin{\dot\cup}S_{i,2}\mathbin{\dot\cup}\{i\}\bigr),
 \qquad
 u(t_1,t_2)=t_1+t_2-1.
\]
Thus
\begin{equation}
 [n]=S_{i,1}\mathbin{\dot\cup}S_{i,2}
 \mathbin{\dot\cup}\{i\}\mathbin{\dot\cup}O_i.
\label{eq:multiblock-partition}
\end{equation}
The admissible length pairs form
\[
 \mathcal T_2=
 \{(t_1,t_2):\delta\le t_1,t_2\le R_0,\
                  u(t_1,t_2)\le n\}.
\]

\subsection{The four-block distribution and centered identities}

For \((t_1,t_2)\in\mathcal T_2\),
\(0\le j<t_1\), \(0\le k<t_2\),
\(0\le c\le n-u(t_1,t_2)\), and \(b\in\{0,1\}\), define
\begin{equation}
 Z_{t_1,t_2;j,k,c,b}
 =\frac1{nM}\left|\left\{(x,y,i):
 \begin{array}{l}
 |T_{i,1}|=t_1,\ |T_{i,2}|=t_2,\\
 d_{S_{i,1}}(x,y)=j,\quad d_{S_{i,2}}(x,y)=k,\\
 d_{O_i}(x,y)=c,\quad \1_{\{x_i\ne y_i\}}=b
 \end{array}\right\}\right|.
\label{eq:Z-definition}
\end{equation}
Every cell records the global distance intrinsically:
\begin{equation}
 \ell=j+k+c+b.
\label{eq:Z-global-distance}
\end{equation}
Thus global-distance conditioning is already built into the four-block
array; no separate conditioned array is needed in the main formulation.

Counting, at each global distance \(\ell\), the recovered coordinates on
which an ordered pair agrees or differs gives
\begin{align}
 \sum_{\substack{t_1,t_2,j,k,c\\j+k+c=\ell}}
 Z_{t_1,t_2;j,k,c,0}
 &=\frac{n-\ell}{n}A_\ell,
\label{eq:Z-centered-zero}\\
 \sum_{\substack{t_1,t_2,j,k,c\\j+k+c+1=\ell}}
 Z_{t_1,t_2;j,k,c,1}
 &=\frac{\ell}{n}A_\ell.
\label{eq:Z-centered-one}
\end{align}
Local and global minimum distance imply
\begin{align}
 Z_{t_1,t_2;j,k,c,b}&=0
 &&\text{if }1\le j+b<\delta
       \text{ or }1\le k+b<\delta,
\label{eq:Z-view-zero}\\
 Z_{t_1,t_2;j,k,c,b}&=0
 &&\text{if }1\le j+k+c+b<d_\star.
\label{eq:Z-global-zero}
\end{align}
No additional zero constraint on the selected union is needed. Indeed,
\(1\le j+k+b<\delta\) already forces one of the two local zero conditions in
\eqref{eq:Z-view-zero}. Thus the four-block array contains the global
distance conditioning without introducing a second conditioned variable.

\subsection{Product-Krawtchouk positivity}

The Fourier constraint is a direct blockwise form of the Delsarte
character-energy identity. We record it because this formulation makes two
features used later transparent: it applies to nonlinear codes, and every
block carries its own spectral degree. Fix a nontrivial additive character
\(\chi\) of \(\F_q\) and write
\(\chi_\alpha(x)=\chi(\alpha\mathbin{\cdot}x)\).

\begin{lemma}[block character-energy identity]
\label{lem:block-character-energy}
Let \(P_1\mathbin{\dot\cup}\cdots\mathbin{\dot\cup}P_s=[n]\), with
\(|P_v|=m_v\), and define
\[
 D_{j_1,\ldots,j_s}=\frac1M
 \bigl|\{(x,y)\in\C^2:d_{P_v}(x,y)=j_v\text{ for all }v\}\bigr|.
\]
For \(0\le w_v\le m_v\), let
\(\Omega_{\boldsymbol w}=\{\alpha\in\F_q^n:
\operatorname{wt}(\alpha|_{P_v})=w_v\text{ for all }v\}\). Then
\begin{equation}
 \sum_{j_1,\ldots,j_s}D_{j_1,\ldots,j_s}
 \prod_{v=1}^s\Kraw_{w_v}(j_v;m_v,q)
 =\frac1M\sum_{\alpha\in\Omega_{\boldsymbol w}}
 \left|\sum_{x\in\C}\chi_\alpha(x)\right|^2\ge0.
\label{eq:block-character-energy}
\end{equation}
\end{lemma}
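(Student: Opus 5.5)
The plan is to expand the right-hand side of \eqref{eq:block-character-energy} and regroup it by the block-distance pattern of each ordered pair. Writing \(|\sum_{x\in\C}\chi_\alpha(x)|^2=\sum_{x,y\in\C}\chi_\alpha(x)\overline{\chi_\alpha(y)}=\sum_{x,y\in\C}\chi(\alpha\cdot(x-y))\), and interchanging the two finite sums, we get
\[
\frac1M\sum_{\alpha\in\Omega_{\boldsymbol w}}\Bigl|\sum_{x\in\C}\chi_\alpha(x)\Bigr|^2
=\frac1M\sum_{x,y\in\C}\ \sum_{\alpha\in\Omega_{\boldsymbol w}}\chi\bigl(\alpha\cdot(x-y)\bigr).
\]
Nonnegativity of the left side is then immediate, since it is a sum of squared moduli. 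So the whole content of the lemma is to identify the inner character sum.

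The key step is to show that, for a fixed vector \(z=x-y\), the inner sum depends only on the block weights \(j_v=\operatorname{wt}(z|_{P_v})=d_{P_v}(x,y)\), and equals \(\prod_v\Kraw_{w_v}(j_v;m_v,q)\). Because the \(P_v\) partition \([n]\), both \(\Omega_{\boldsymbol w}\) and the character factor as products. First, \(\Omega_{\boldsymbol w}=\prod_v\{\beta\in\F_q^{P_v}:\operatorname{wt}(\beta)=w_v\}\). Second, \(\chi(\alpha\cdot z)=\prod_v\chi(\alpha|_{P_v}\cdot z|_{P_v})\). Hence the inner sum equals \(\prod_v\sum_{\operatorname{wt}(\beta)=w_v}\chi(\beta\cdot z|_{P_v})\), and it remains to prove the classical one-block identity \(\sum_{\beta\in\F_q^m,\ \operatorname{wt}(\beta)=w}\chi(\beta\cdot z)=\Kraw_w(\operatorname{wt}(z);m,q)\).

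For the one-block identity, I will fix the support \(W\) of \(\beta\), with \(|W|=w\), and let \(u=|W\cap\operatorname{supp}(z)|\). The sum over \(\beta\) supported exactly on \(W\) factors over coordinates:
\begin{enumerate}[label=(\alph*)]
\item on a coordinate where \(z\) is zero, the factor is \(\sum_{\beta_e\ne0}1=q-1\);
\item on a coordinate where \(z_e\ne0\), the factor is \(\sum_{\beta_e\ne0}\chi(\beta_ez_e)=-1\), because \(\beta_e\mapsto\beta_ez_e\) permutes \(\F_q^\ast\) and \(\chi\) is nontrivial, so \(\sum_{c\in\F_q}\chi(c)=0\).
\end{enumerate}
This gives \((-1)^u(q-1)^{w-u}\). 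Counting the supports \(W\) with a given \(u\), of which there are \(\binom{j}{u}\binom{m-j}{w-u}\) where \(j=\operatorname{wt}(z)\), reproduces exactly the defining formula for \(\Kraw_w(j;m,q)\).

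Finally, I will group the ordered pairs \((x,y)\in\C^2\) by their block-distance vector \((j_1,\ldots,j_s)\). By the definition of \(D_{j_1,\ldots,j_s}\), the factor \(1/M\) together with this grouping turns \(\frac1M\sum_{x,y}\prod_v\Kraw_{w_v}(d_{P_v}(x,y);m_v,q)\) into the left side of \eqref{eq:block-character-energy}. Nothing here uses linearity of \(\C\), only that \(\chi\) is a nontrivial additive character, so the identity holds for arbitrary codes.

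The main obstacle is the coordinatewise character sum in step (b), namely justifying \(\sum_{c\ne0}\chi(c)=-1\) and checking that the product over blocks factors cleanly. Everything else is bookkeeping.
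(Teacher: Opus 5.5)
Your proposal is correct and follows the paper's proof essentially step for step. Both arguments expand the squared character sum, interchange the sums over \(\alpha\) and \((x,y)\), factor the inner sum over the disjoint blocks, and evaluate each block factor as a Krawtchouk value from the coordinatewise contributions \(q-1\) and \(-1\); you simply spell out the support-counting derivation of the one-block identity and the final regrouping by block-distance vectors, which the paper leaves implicit.
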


\begin{proof}
Expand the squared character sums and interchange the sums over \(\alpha\)
and \((x,y)\). On block \(P_v\), summing characters of weight \(w_v\)
gives
\[
 \sum_{\substack{\beta\in\F_q^{P_v}\\\operatorname{wt}(\beta)=w_v}}
 \chi\!\left(\beta\mathbin{\cdot}(x-y)|_{P_v}\right)
 =\Kraw_{w_v}\!\left(d_{P_v}(x,y);m_v,q\right),
\]
since a zero coordinate contributes \(q-1\) and a nonzero coordinate
contributes \(-1\). The blocks are disjoint, so these factors multiply.
\end{proof}

For \(0\le p_1<t_1\), \(0\le p_2<t_2\),
\(0\le\eta\le n-u(t_1,t_2)\), and \(e\in\{0,1\}\), put
\begin{align}
 \Phi_{t_1,t_2;p_1,p_2,\eta,e}
 =\sum_{j,k,c,b}Z_{t_1,t_2;j,k,c,b}\,
 &\Kraw_{p_1}(j;t_1-1,q)
 \Kraw_{p_2}(k;t_2-1,q)\notag\\
 &{}\times\Kraw_\eta(c;n-u(t_1,t_2),q)
 \Kraw_e(b;1,q).
\label{eq:multiblock-spectrum}
\end{align}

\begin{theorem}[four-block positivity]
\label{thm:multiblock-positivity}
For every \(q\)-ary code, not necessarily linear, and every selected two-view
configuration, we have
\[
 \Phi_{t_1,t_2;p_1,p_2,\eta,e}\ge0
\]
on the natural index ranges.
\end{theorem}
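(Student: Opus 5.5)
The plan is to reduce the statement to Lemma~\ref{lem:block-character-energy}, applied separately for each recovered coordinate \(i\), and then average over \(i\). Fix \((t_1,t_2)\in\mathcal T_2\) and the spectral indices \((p_1,p_2,\eta,e)\) in their natural ranges. Let \(I_{t_1,t_2}\) be the set of coordinates \(i\) with \(|T_{i,1}|=t_1\) and \(|T_{i,2}|=t_2\). For each such \(i\), the four blocks \(P_1=S_{i,1}\), \(P_2=S_{i,2}\), \(P_3=O_i\), \(P_4=\{i\}\) partition \([n]\) by \eqref{eq:multiblock-partition}. Their sizes are \(t_1-1\), \(t_2-1\), \(n-u(t_1,t_2)\) and \(1\), which are exactly the lengths appearing in the Krawtchouk factors of \eqref{eq:multiblock-spectrum}. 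Write \(D^{(i)}_{j,k,c,b}\) for the normalized four-block distance distribution of \(\C\) with respect to this partition, as in the lemma.

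The key step is a bookkeeping identity. Directly from \eqref{eq:Z-definition},
\[
 Z_{t_1,t_2;j,k,c,b}=\frac1n\sum_{i\in I_{t_1,t_2}} D^{(i)}_{j,k,c,b},
\]
since the triples \((x,y,i)\) counted in \(Z\) are grouped by \(i\), and each group, divided by \(M\), is exactly \(D^{(i)}_{j,k,c,b}\). Substituting this into \eqref{eq:multiblock-spectrum} and exchanging the finite sums gives
\[
 \Phi_{t_1,t_2;p_1,p_2,\eta,e}
 =\frac1n\sum_{i\in I_{t_1,t_2}}\ \sum_{j,k,c,b}D^{(i)}_{j,k,c,b}\,
 \Kraw_{p_1}(j;t_1-1,q)\Kraw_{p_2}(k;t_2-1,q)\Kraw_\eta(c;n-u,q)\Kraw_e(b;1,q).
\]
By Lemma~\ref{lem:block-character-energy} with \(s=4\) and \(\boldsymbol w=(p_1,p_2,\eta,e)\), each inner sum equals \(\frac1M\sum_{\alpha\in\Omega_{\boldsymbol w}}|\sum_{x\in\C}\chi_\alpha(x)\rvert^2\ge0\). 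A nonnegative combination of nonnegative terms is nonnegative, which proves the theorem. If \(I_{t_1,t_2}\) is empty, both sides vanish and the inequality holds trivially.

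Two points need care, and neither is a real obstacle. First, the index ranges must match the lemma's hypotheses \(0\le w_v\le m_v\). This holds because \(p_1<t_1\) gives \(p_1\le t_1-1\), similarly \(p_2\le t_2-1\), and \(e\in\{0,1\}\) with \(|\{i\}|=1\). When \(O_i\) is empty, \(\eta=0\) and \(\Kraw_0\equiv1\), so the lemma still applies with an empty block, or with that block dropped. Second, the lemma already makes no linearity assumption, since it only expands squared character sums over \(\C\). The main thing to verify is therefore only the decomposition of \(Z\) into per-coordinate distributions, including the point that the partition depends on \(i\) only through fixed block sizes within \(I_{t_1,t_2}\).
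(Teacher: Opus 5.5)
Your proposal is correct and matches the paper's proof. You write each \(Z\)-cell as \(1/n\) times the sum of the per-coordinate four-block distributions over the coordinates with length pair \((t_1,t_2)\), then apply Lemma~\ref{lem:block-character-energy} to each partition \eqref{eq:multiblock-partition}; your explicit checks of block sizes, degree ranges, and the empty-\(O_i\) case make precise what the paper leaves implicit.
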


\begin{proof}
Fix a recovered coordinate \(i\) with length pair \((t_1,t_2)\). Applied to
\eqref{eq:multiblock-partition}, Lemma~\ref{lem:block-character-energy}
identifies the product-Krawtchouk transform of its four-block distance
distribution with a sum of squared character energies, hence with a
nonnegative quantity. By \eqref{eq:Z-definition},
\(Z_{t_1,t_2;j,k,c,b}\) is \(1/n\) times the sum of these normalized
coordinate distributions over all \(i\) with that ordered length pair.
Therefore \(\Phi_{t_1,t_2;p_1,p_2,\eta,e}\) is the same \(1/n\)-weighted
sum of nonnegative character energies.
\end{proof}

The spectral degree \(\eta\) on \(O_i\) is the feature that distinguishes
the four-block model from a union-only two-view model. Setting \(\eta=0\)
keeps the joint spectrum on the selected union, whereas \(\eta>0\) records
how that local sector extends to a global character.

The same character-energy picture shows exactly how the global spectrum
splits according to whether the character is nonzero at the recovered
coordinate. Define
\begin{align}
 \mathcal S^{(1)}_w
 &=\sum_{t_1,t_2}
   \sum_{p_1+p_2+\eta=w-1}
   \Phi_{t_1,t_2;p_1,p_2,\eta,1},
 &&1\le w\le n,
\label{eq:multiblock-pointed-share}\\
 \mathcal S^{(0)}_w
 &=\sum_{t_1,t_2}
   \sum_{p_1+p_2+\eta=w}
   \Phi_{t_1,t_2;p_1,p_2,\eta,0},
 &&0\le w<n.
\label{eq:multiblock-punctured-share}
\end{align}

\begin{proposition}[global spectral splitting]
\label{prop:multiblock-spectral-shares}
The four-block spectrum satisfies
\begin{equation}
 \mathcal S^{(1)}_w=\frac wnB_w,
 \qquad
 \mathcal S^{(0)}_w=\frac{n-w}{n}B_w
\label{eq:multiblock-spectral-share}
\end{equation}
on the respective ranges.
\end{proposition}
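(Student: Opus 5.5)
Both identities in \eqref{eq:multiblock-spectral-share} follow from the character-energy form of the spectrum. The plan is to express each \(\Phi\) as a sum of squared character sums over a class of characters \(\alpha\), and then reorganize the sum over \((i,p_1,p_2,\eta,e)\) as a sum over all \(\alpha\) of fixed global weight \(w\), counted with multiplicity.

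First, fix \(i\) with length pair \((t_1,t_2)\) and apply Lemma~\ref{lem:block-character-energy} to the partition \eqref{eq:multiblock-partition}. The \(i\)-th summand of \(n\Phi_{t_1,t_2;p_1,p_2,\eta,e}\) equals
\[
 \frac1M\sum_{\alpha}\Bigl|\sum_{x\in\C}\chi_\alpha(x)\Bigr|^2 ,
\]
where the outer sum runs over those \(\alpha\) with \(\operatorname{wt}(\alpha|_{S_{i,1}})=p_1\), \(\operatorname{wt}(\alpha|_{S_{i,2}})=p_2\), \(\operatorname{wt}(\alpha|_{O_i})=\eta\), and \(\alpha_i\ne0\) exactly when \(e=1\).

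Second, fix \(e=1\). Sum over all \(i\), over all length pairs \((t_1,t_2)\), and over all \((p_1,p_2,\eta)\) with \(p_1+p_2+\eta=w-1\). For a fixed \(i\), the blocks partition \([n]\), so the union of the corresponding \(\alpha\)-classes is exactly the set of \(\alpha\) with \(\operatorname{wt}(\alpha)=w\) and \(\alpha_i\ne0\). Each such \(\alpha\) is counted once. This gives
\[
 n\,\mathcal S^{(1)}_w=\frac1M\sum_{i=1}^n\ \sum_{\substack{\operatorname{wt}(\alpha)=w\\ \alpha_i\ne0}}\Bigl|\sum_{x}\chi_\alpha(x)\Bigr|^2 .
\]
Now interchange the sums over \(i\) and \(\alpha\). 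Each \(\alpha\) of weight \(w\) is counted once for each \(i\) in its support, that is, \(w\) times. Hence \(n\mathcal S^{(1)}_w=w\cdot\frac1M\sum_{\operatorname{wt}(\alpha)=w}|\cdot|^2\).

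The case \(e=0\) is the same argument. The condition becomes \(\alpha_i=0\) with \(p_1+p_2+\eta=w\), and each \(\alpha\) of weight \(w\) is now counted for the \(n-w\) coordinates outside its support.

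Third, identify \(\frac1M\sum_{\operatorname{wt}(\alpha)=w}|\sum_x\chi_\alpha(x)|^2\) with \(B_w\). This is Lemma~\ref{lem:block-character-energy} applied to the trivial partition \(P_1=[n]\), which gives \(\sum_\ell A_\ell\Kraw_w(\ell;n,q)\). Dividing by \(n\) then gives both formulas.

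The step that needs the most care is the second one: checking that the index ranges match exactly. Each \(\alpha\) of weight \(w\) must land in exactly one term. For this, the decomposition must respect the ranges \(0\le p_h<t_h\) and \(0\le\eta\le n-u\), which are precisely the possible weights on those blocks. One must also check that the endpoint ranges \(1\le w\le n\) for \(e=1\) and \(0\le w<n\) for \(e=0\) are exactly the ranges where the relevant \(\alpha\) can exist. At the excluded endpoints, \(w=0\) for \(e=1\) and \(w=n\) for \(e=0\), the identity still holds trivially because both sides vanish. Finally, \(\Kraw_e(b;1,q)\) equals \(q-1\) or \(-1\) according to \(b\) when \(e=1\), which matches summing \(\chi(\alpha_i(x_i-y_i))\) over \(\alpha_i\ne0\). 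So the four-block factorization is consistent with the single-coordinate block \(\{i\}\).
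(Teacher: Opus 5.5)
Your proposal is correct and follows the paper's proof. Both arguments express each sector, summed over length pairs and block degrees of total weight \(w\), as \(\frac1n\sum_i\sum_{\operatorname{wt}(\alpha)=w,\ \alpha_i\ne0}\) (respectively \(\alpha_i=0\)) of the character energies; they then double count to obtain the factors \(w\) and \(n-w\) and identify the remaining sum with \(B_w\). Your explicit checks of the block index ranges and the trivial endpoint cases are sound details that the paper leaves implicit.
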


\begin{proof}
For \(\alpha\in\F_q^n\), write
\[
 E_\C(\alpha)=\frac1M\left|\sum_{x\in\C}\chi_\alpha(x)\right|^2.
\]
The global character-energy identity gives
\[
 B_w=\sum_{\operatorname{wt}(\alpha)=w}E_\C(\alpha).
\]
On the other hand, the proof of Theorem~\ref{thm:multiblock-positivity}
shows that, after summing over length pairs and all block degrees of total
weight \(w\), the sector \(e=1\) is
\[
 \frac1n\sum_{i=1}^n
 \sum_{\substack{\operatorname{wt}(\alpha)=w\\ \alpha_i\ne0}}
 E_\C(\alpha).
\]
A character of weight \(w\) occurs in the inner sum for exactly \(w\)
coordinates \(i\), giving \(\mathcal S_w^{(1)}=(w/n)B_w\). The sector
\(e=0\) counts the complementary \(n-w\) coordinates and gives
\(\mathcal S_w^{(0)}=((n-w)/n)B_w\).
\end{proof}

Because every four-block spectral term is nonnegative, the
outside-degree-zero sector retained by the conditioned benchmark is a
sub-sum of \(\mathcal S_w^{(1)}\) or \(\mathcal S_w^{(0)}\). Thus the
exact splitting above yields the character-charge inequalities used in the
comparison projection below.

\subsection{Local collision inequalities}

For a fixed length pair set
\[
 m_{t_1,t_2}=\sum_{j,k,c,b}Z_{t_1,t_2;j,k,c,b}.
\]
Each local projection has minimum distance at least \(\delta\). The
punctured union has the same property.

\begin{lemma}[punctured helper-union distance]
\label{lem:punctured-helper-union}
The set of distinct projections of \(\C\) on
\(S_{i,1}\mathbin{\dot\cup}S_{i,2}\) has minimum distance at least
\(\delta\).
\end{lemma}

\begin{proof}
Take two codewords with distinct projections on the helper union. If their
symbols at \(i\) agree, at least one selected local view differs and
contributes at least \(\delta\) helper differences. If their symbols at
\(i\) differ, both local views differ and each helper block contributes at
least \(\delta-1\) differences. Their disjoint union therefore contributes
at least \(2(\delta-1)\ge\delta\).
\end{proof}

The two complete local views \(T_{i,1}\) and \(T_{i,2}\), together
with the punctured helper union
\(S_{i,1}\mathbin{\dot\cup}S_{i,2}\), give the three collision constraints
used in the LP. The common mechanism is worth making explicit. Suppose a
projection has length \(L\), minimum distance at least \(\delta\), distinct
projected words \(z_1,\ldots,z_N\), and fiber sizes \(f_1,\ldots,f_N\).
Singleton gives \(N\le q^{L-\delta+1}\), while Cauchy--Schwarz gives
\[
 \frac1M\sum_{u=1}^N f_u^2
 \ge \frac{M}{N}
 \ge Mq^{-(L-\delta+1)}.
\]
The left side is exactly the normalized number of ordered codeword pairs
that collide under the projection. Applying this to \(T_{i,1}\),
\(T_{i,2}\), and \(S_{i,1}\mathbin{\dot\cup}S_{i,2}\), and then averaging
coordinates with a fixed length pair, gives
\begin{align}
 q^{t_1-\delta+1}
 \sum_{k,c}Z_{t_1,t_2;0,k,c,0}
 &\ge m_{t_1,t_2},
\label{eq:collision-view-one}\\
 q^{t_2-\delta+1}
 \sum_{j,c}Z_{t_1,t_2;j,0,c,0}
 &\ge m_{t_1,t_2},
\label{eq:collision-view-two}\\
 q^{t_1+t_2-\delta-1}
 \sum_c Z_{t_1,t_2;0,0,c,0}
 &\ge m_{t_1,t_2}.
\label{eq:multiblock-collision}
\end{align}
The last exponent is \((t_1+t_2-2)-\delta+1\), reflecting the length of
the punctured helper union. This is the extra collision information obtained
by keeping the two repair views simultaneously.

\subsection{The LP and projection hierarchy}

The four-block LP is now the intersection of the three layers above. It
maximizes
\[
 M=\sum_{\ell=0}^n A_\ell
\]
over the global Delsarte variables \(A\) and nonnegative four-block
variables \(Z\), subject to the centered and support constraints
\eqref{eq:Z-centered-zero}--\eqref{eq:Z-global-zero}, the positivity
constraints of Theorem~\ref{thm:multiblock-positivity}, and the collision
rows \eqref{eq:collision-view-one}--\eqref{eq:multiblock-collision}. Denote
its optimum by \(U_{\FB}\).

To isolate what is lost when positive outside spectral degrees are
discarded, we define the conditioned coarsening explicitly. For
\((t_1,t_2)\in\mathcal T_2\), retain nonnegative variables
\[
 Y_{\ell;t_1,t_2;j,k,b},
\]
where \(0\le j<t_1\), \(0\le k<t_2\), \(b\in\{0,1\}\), and
\begin{equation}
 j+k+b\le \ell\le j+k+b+n-u(t_1,t_2).
\label{eq:cond-range}
\end{equation}
The variable \(Y\) records the same cell as \(Z\), with the outside distance
replaced by the total distance; for a code-induced point,
\begin{equation}
 Y_{\ell;t_1,t_2;j,k,b}
 =Z_{t_1,t_2;j,k,\ell-j-k-b,b}.
\label{eq:cond-reindex}
\end{equation}
The centered identities become
\begin{align}
 \sum_{t_1,t_2,j,k}Y_{\ell;t_1,t_2;j,k,0}
 &=\frac{n-\ell}{n}A_\ell,
\label{eq:cond-centered-zero}\\
 \sum_{t_1,t_2,j,k}Y_{\ell;t_1,t_2;j,k,1}
 &=\frac{\ell}{n}A_\ell.
\label{eq:cond-centered-one}
\end{align}
We retain the local support zeros
\[
 Y_{\ell;t_1,t_2;j,k,b}=0
 \quad\text{if}\quad
 1\le j+b<\delta\ \text{ or }\ 1\le k+b<\delta.
\]
Together with the global support of \(A\), the centered identities also
eliminate every cell with \(1\le\ell<d_\star\).

The spectral part of this coarsening keeps only the outside-degree-zero
sector. Define
\begin{equation}
 \Psi_{t_1,t_2;p_1,p_2,e}
 =\sum_{\ell,j,k,b}Y_{\ell;t_1,t_2;j,k,b}
 \Kraw_{p_1}(j;t_1-1,q)
 \Kraw_{p_2}(k;t_2-1,q)
 \Kraw_e(b;1,q),
\label{eq:cond-spectrum}
\end{equation}
for the natural degree ranges, and impose
\begin{equation}
 \Psi_{t_1,t_2;p_1,p_2,e}\ge0.
\label{eq:cond-spectrum-positive}
\end{equation}
Thus \(\Psi\) is precisely the \(\eta=0\) slice of the four-block spectrum.
For \(1\le w\le n\) and \(1\le w<n\), respectively, define
\begin{align}
 Q_w^{(1)}
 &=\sum_{t_1,t_2}\sum_{p_1+p_2=w-1}
   \Psi_{t_1,t_2;p_1,p_2,1},
\label{eq:cond-pointed-charge-def}\\
 Q_w^{(0)}
 &=\sum_{t_1,t_2}\sum_{p_1+p_2=w}
   \Psi_{t_1,t_2;p_1,p_2,0},
\label{eq:cond-punctured-charge-def}
\end{align}
and retain the global character-charge bounds
\begin{align}
 Q_w^{(1)}&\le\frac{w}{n}B_w,
 &&1\le w\le n,
\label{eq:cond-pointed-charge}\\
 Q_w^{(0)}&\le\frac{n-w}{n}B_w,
 &&1\le w<n.
\label{eq:cond-punctured-charge}
\end{align}
These inequalities are the part of the full spectral splitting visible when
all positive outside degrees are discarded.

Finally, with
\[
 m^{\Cond}_{t_1,t_2}
 =\sum_{\ell,j,k,b}Y_{\ell;t_1,t_2;j,k,b},
\]
we retain all three collision rows, including the punctured helper-union
constraint:
\begin{align}
 q^{t_1-\delta+1}
 \sum_{\ell,k}Y_{\ell;t_1,t_2;0,k,0}
 &\ge m^{\Cond}_{t_1,t_2},
\label{eq:cond-collision-one}\\
 q^{t_2-\delta+1}
 \sum_{\ell,j}Y_{\ell;t_1,t_2;j,0,0}
 &\ge m^{\Cond}_{t_1,t_2},
\label{eq:cond-collision-two}\\
 q^{t_1+t_2-\delta-1}
 \sum_{\ell}Y_{\ell;t_1,t_2;0,0,0}
 &\ge m^{\Cond}_{t_1,t_2}.
\label{eq:cond-collision-union}
\end{align}
The conditioned LP maximizes the same objective \(M=\sum_\ell A_\ell\) over
\(A\) and \(Y\), subject to the global Delsarte layer and the constraints
above. Denote its optimum by \(U_{\Cond}\). It is used only as a benchmark
for the information carried by positive spectral degrees on \(O_i\).

\begin{theorem}[validity and comparison projections]
\label{thm:multiblock-validity-projections}
Every code with all-symbol local-distance
\((r,\delta,a)\)-availability, \(a\ge2\), induces a feasible four-block
point of objective \(|\C|\). Moreover:

\begin{enumerate}[label=\textup{(\roman*)}]
\item merging either helper block with \(O_i\) gives a feasible
point of the one-view three-block LP of \cite{paperone2026};
\item assigning each cell to \(\ell=j+k+c+b\) gives a feasible point of the
conditioned coarsening.
\end{enumerate}
Consequently,
\begin{equation}
 M_{\max}(q,n,d,r,\delta,a)
 \le U_{\FB}
 \le\min\{U_{\Cond},U_{\mathrm{3blk}}\}.
\label{eq:multiblock-comparison}
\end{equation}
\end{theorem}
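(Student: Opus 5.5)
The plan has three parts. First, check that the code-induced four-block point is feasible. Then construct the two projections and verify feasibility and objective in each. Finally, assemble the chain of inequalities.

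\textbf{Feasibility of the code-induced point.} Define \(A_\ell\) from \(\C\) as in Section~\ref{sec:availability-baseline}, and define \(Z\) by \eqref{eq:Z-definition}, using the first two selected views at every \(i\). The objective \(\sum_\ell A_\ell=M\) and the Delsarte layer \(B_w\ge0\) hold by \cite{delsarte1973}. For the centered identities \eqref{eq:Z-centered-zero}--\eqref{eq:Z-centered-one}, fix an ordered pair at distance \(\ell\). It contributes \(1/(nM)\) for each \(i\), exactly \(\ell\) of which have \(b=1\), and \eqref{eq:Z-global-distance} places it in cells with \(j+k+c+b=\ell\).

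For the support zeros \eqref{eq:Z-view-zero}, note that \(j+b\) is \(d_{T_{i,1}}(x,y)\). If this quantity is nonzero, then \(d(\C|_{T_{i,1}})\ge\delta\) forces \(j+b\ge\delta\), and the same argument applies to \(T_{i,2}\). The zeros \eqref{eq:Z-global-zero} follow from \(d_\star\). Positivity is Theorem~\ref{thm:multiblock-positivity}. The collision rows \eqref{eq:collision-view-one}--\eqref{eq:multiblock-collision} follow from the Singleton--Cauchy--Schwarz argument: apply it to each \(i\) with \(T_{i,1}\), \(T_{i,2}\), and to the helper union using Lemma~\ref{lem:punctured-helper-union}, which gives lengths \(t_1,t_2,t_1+t_2-2\). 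Then sum over the \(i\) with the given length pair and divide by \(nM\). The key observation is that the per-\(i\) total mass is \(M\cdot M/(nM)\), so both sides scale consistently into \(m_{t_1,t_2}\).

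\textbf{Projection (i).} Merge \(S_{i,2}\) into \(O_i\). The three-block variable indexed by \((t_1;j,b,c')\) is then the sum of \(Z_{t_1,t_2;j,k,c,b}\) over \(t_2\) and over \(k+c=c'\). The three-block constraints of \cite{paperone2026} are exactly the code-derived relations for the partition \(S_{i,1}\mathbin{\dot\cup}\{i\}\mathbin{\dot\cup}O_i'\), so it suffices to derive each of them from the four-block rows.

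The centered identities, the support zeros and the single-view collision row \eqref{eq:collision-view-one} aggregate directly, since summing over \(t_2\) preserves the inequality form. Positivity is the main obstacle. Here I would use the Krawtchouk addition formula
\[
 \Kraw_{\eta'}(k+c;\,t_2-1+n-u,q)=\sum_{p_2+\eta=\eta'}\Kraw_{p_2}(k;t_2-1,q)\,\Kraw_{\eta}(c;n-u,q),
\]
which holds because the characters of weight \(\eta'\) on the merged block split by their weight on each piece. With it, each three-block spectral value becomes a nonnegative sum of \(\Phi\)'s, and summing over \(t_2\) keeps it nonnegative. Any spectral-splitting identity in the three-block LP follows the same way from Proposition~\ref{prop:multiblock-spectral-shares}. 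The objective \(M\) and the variables \(A\) are unchanged.

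\textbf{Projection (ii).} Define \(Y\) by \eqref{eq:cond-reindex}. The range \eqref{eq:cond-range} is simply \(0\le c\le n-u\). The identities \eqref{eq:cond-centered-zero}--\eqref{eq:cond-centered-one} are \eqref{eq:Z-centered-zero}--\eqref{eq:Z-centered-one} reindexed, and the support zeros and the collision rows \eqref{eq:cond-collision-one}--\eqref{eq:cond-collision-union} are direct reindexings as well.

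Since \(\Kraw_0\equiv1\), we have \(\Psi_{t_1,t_2;p_1,p_2,e}=\Phi_{t_1,t_2;p_1,p_2,0,e}\ge0\). The charge bounds \eqref{eq:cond-pointed-charge}--\eqref{eq:cond-punctured-charge} hold because \(Q_w^{(e)}\) is the \(\eta=0\) sub-sum of \(\mathcal S_w^{(e)}\). The omitted terms are nonnegative by Theorem~\ref{thm:multiblock-positivity}, so Proposition~\ref{prop:multiblock-spectral-shares} yields the inequality.

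\textbf{Conclusion.} Because both projections preserve \(A\), and hence the objective, every feasible four-block point, and not only code-induced ones, maps to feasible points of both coarser LPs with the same value. The same verifications go through using only the LP constraints. Taking suprema gives \(U_{\FB}\le\min\{U_{\Cond},U_{\mathrm{3blk}}\}\), and the code-induced point gives \(M_{\max}\le U_{\FB}\).

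The delicate part is positivity in (i). The three-block positivity must be derivable from the four-block LP constraints themselves via the addition formula, not merely from the code, and this relies on the exact form of the constraints in \cite{paperone2026}.
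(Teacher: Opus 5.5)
Your proposal is correct and follows the paper's proof essentially step by step: code-induced feasibility from the preceding identities; in (i), the Krawtchouk addition formula writes each merged-block transform as a sum of nonnegative $\Phi$'s, while the centered, support and collision rows project by summation; in (ii), the reindexing \eqref{eq:cond-reindex}, the identity $\Psi_{t_1,t_2;p_1,p_2,e}=\Phi_{t_1,t_2;p_1,p_2,0,e}$, and the nonnegative $\eta=0$ sub-sum argument against Proposition~\ref{prop:multiblock-spectral-shares}. One step that you, like the paper, leave implicit is that this proposition is proved via characters, so only for code-induced points, while your claim that everything goes through for arbitrary LP-feasible points needs it there too; it does hold, because after the addition formula and \eqref{eq:Z-centered-zero}--\eqref{eq:Z-centered-one} it reduces to $w\Kraw_w(\ell;n,q)=(q-1)(n-\ell)\Kraw_{w-1}(\ell;n-1,q)-\ell\Kraw_{w-1}(\ell-1;n-1,q)$ and $(n-w)\Kraw_w(\ell;n,q)=(n-\ell)\Kraw_w(\ell;n-1,q)+\ell\Kraw_w(\ell-1;n-1,q)$.
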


\begin{proof}
The preceding identities construct a feasible four-block point from any
code once two views are selected; none of them uses linearity.

For (i), retain one helper distance and replace the outside distance by the
sum of \(c\) and the omitted helper distance. Krawtchouk convolution writes
each transform on this merged block as a sum of nonnegative four-block
transforms, using
\[
 \Kraw_s(x+y;m_1+m_2,q)
 =\sum_{\rho+\eta=s}
 \Kraw_\rho(x;m_1,q)\Kraw_\eta(y;m_2,q).
\]
The centered, support, and collision rows project by summation.

For (ii), use the reindexing \eqref{eq:cond-reindex}. Equation
\eqref{eq:Z-global-distance} gives \eqref{eq:cond-range}, while
\eqref{eq:Z-centered-zero}--\eqref{eq:Z-centered-one} give
\eqref{eq:cond-centered-zero}--\eqref{eq:cond-centered-one}. The local zero
conditions are unchanged. Moreover,
\[
 \Psi_{t_1,t_2;p_1,p_2,e}
 =\Phi_{t_1,t_2;p_1,p_2,0,e},
\]
so \eqref{eq:cond-spectrum-positive} follows from four-block positivity.
All three collision rows project directly, including the punctured
helper-union constraint \eqref{eq:cond-collision-union}.

For the charge bounds, \(Q_w^{(1)}\) and \(Q_w^{(0)}\) are the \(\eta=0\)
sub-sums of \(\mathcal S_w^{(1)}\) and \(\mathcal S_w^{(0)}\), respectively.
All four-block spectral terms are nonnegative, so
Proposition~\ref{prop:multiblock-spectral-shares} gives
\eqref{eq:cond-pointed-charge}--\eqref{eq:cond-punctured-charge}. The
objective is unchanged in both projections, proving
\eqref{eq:multiblock-comparison}.
\end{proof}

The two projections lose different information. The one-view projection
keeps the full outside spectrum but forgets one helper block as a separate
local view. The conditioned projection keeps both helper distances and the
exact total distance but forgets every positive spectral degree on \(O_i\).
Thus the two comparison projections can lose different sources of
strength.

The number of four-block local variables is
\begin{equation}
 N_Z=
 2\!\!\sum_{\substack{(t_1,t_2)\in\mathcal T_2}}
 (n-t_1-t_2+2)t_1t_2
 =O(nr^4)
\label{eq:NZ}
\end{equation}
for fixed \(\delta\). Since \(\ell=j+k+c+b\) is a reindexing of the outside
distance \(c\), the conditioned coarsening has exactly the same number of
local variables.

\section{Exact bounds for availability \texorpdfstring{$a=2$}{a=2}}
\label{sec:exact-bounds}

The theorem-level claims in this section use exact arithmetic throughout.
For every certified parameter set, the three displayed LPs are reconstructed
independently; rational primal and dual solutions are checked constraint by
constraint, and their objectives agree exactly. The attaining construction is
then checked directly against Definition~\ref{def:availability} and against
the induced four-block constraints. The LP upper bounds do not assume
linearity; linearity is used only to describe the attaining constructions
compactly. Floating-point computations are reported separately in
Section~\ref{subsec:numerical-landscape} and are not used as proofs.

\begin{theorem}
\label{thm:exact-availability-two}
For all-symbol local-distance availability \(a=2\),
\[
\begin{aligned}
 M_{\max}(2,8,4,2,2,2)&=8, &
 M_{\max}(3,7,3,2,2,2)&=27,\\
 M_{\max}(4,7,3,2,2,2)&=64.
\end{aligned}
\]
For each parameter set, the four-block LP is strictly stronger than both
comparison relaxations.
\end{theorem}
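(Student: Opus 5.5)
Each equality splits into an upper bound and a matching construction. The upper bound comes from Theorem~\ref{thm:multiblock-validity-projections}, which gives \(M_{\max}\le U_{\FB}\). So it suffices to show \(U_{\FB}\le 8,27,64\) for the three parameter sets and to exhibit codes of those sizes satisfying Definition~\ref{def:availability} with \(a=2\). For the strictness claim we must also show \(U_{\Cond}>U_{\FB}\) and \(U_{\mathrm{3blk}}>U_{\FB}\) in each case.

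\textbf{Upper bounds.} For each parameter set we enumerate \(\mathcal T_2\). Here \(r=\delta=2\), so \(R_0=3\) and \(t_h\in\{2,3\}\). We write out the four-block LP exactly, with variables \(A_\ell\) and \(Z_{t_1,t_2;j,k,c,b}\) subject to the supports \eqref{eq:Z-view-zero}--\eqref{eq:Z-global-zero}. We then produce a rational dual vector: multipliers on the centered identities, on the Delsarte rows \(B_w\ge0\), on the positivity rows \(\Phi\ge0\), and on the collision rows \eqref{eq:collision-view-one}--\eqref{eq:multiblock-collision}. Weak duality then bounds the objective by \(8\), \(27\) or \(64\). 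Verification is a finite check in exact arithmetic: every reduced cost must have the correct sign on every primal column. In practice we would solve the LP in floating point, identify the active constraint set, and solve the resulting square system over \(\mathbb Q\) to recover exact dual values.

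\textbf{Constructions.} The code for \(q=2\), \(n=8\) is a linear \([8,3,4]\) code. We take a generator matrix whose coordinates are arranged so that every coordinate \(i\) lies in two weight-\(3\) dual codewords meeting only at \(i\). A candidate is a \([8,3,4]\) subcode of the extended Hamming code whose dual contains enough weight-\(3\) local parities. Each such parity gives a view \(T_{i,h}\) of size \(3\) on which the projection is a \([3,2,2]\) even-weight code.

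For \(q=3,4\) and \(n=7\) we need dimension \(3\) with \(d\ge3\). We look for an \([7,3,3]_q\) code whose dual contains, for each coordinate, two weight-\(3\) parities with disjoint supports apart from that coordinate. One concrete route is to take seven such supports forming a Fano-plane-like incidence, since each point of the Fano plane lies on three lines pairwise meeting only at that point. We then choose nonzero coefficients so that the dual has rank \(4\) and \(d\ge3\). For each construction we check directly that every projection \(\C|_{T_{i,h}}\) has minimum distance at least \(2\) and that the size is \(q^3\). A small search over coefficients should settle this.

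\textbf{Strictness.} We exhibit feasible primal points of the conditioned LP and of the three-block LP with objective strictly larger than the four-block bound. Exact rational primal solutions with objective above \(8\), \(27\) and \(64\) respectively suffice. Each is checked constraint by constraint against the formulations, with the three-block model taken from \cite{paperone2026}.

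\textbf{Main obstacle.} We expect the hardest step to be obtaining exact dual certificates. The LPs have many degenerate optimal faces, and the collision coefficients \(q^{t-\delta+1}\) make the floating-point duals ill-conditioned. We would first fix the support of an optimal basis from the numerical solution and solve for it rationally. If the recovered dual is infeasible, we would perturb the choice of basis.
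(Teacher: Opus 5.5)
Your upper-bound step (an exact rational dual certificate for the four-block LP, combined with \(M_{\max}\le U_{\FB}\)) is the paper's argument. Your strictness step is valid and slightly lighter than the paper's. A rational primal-feasible point of each coarsening with objective above \(8,27,64\) suffices, whereas the paper certifies the exact coarse optima, e.g.\ \(U_{\mathrm{3blk}}=10\) and \(U_{\Cond}=3508/337\) in the binary row. The genuine gap is the lower-bound half. You never exhibit codes of size \(8\) and \(27\), and the concrete routes you propose for \(q=2\) and \(q=3\) cannot succeed.

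\textbf{Binary case.} No \([8,3,4]\) subcode \(\C\) of the self-dual extended Hamming code \(H_8\) works, since \(\C^\perp=H_8+\langle v\rangle\) for some \(v\).
If \(v\) has odd weight, its coset contains some \(e_i\), so \(\C=\{x\in H_8:x_i=0\}\). Deleting \(i\) leaves the simplex code, which has dual distance \(3\). Hence every projection to \(\{i\}\cup S\) with \(|S|\le2\) has distance at most \(1\), and coordinate \(i\) has no admissible view.
If \(v\) has even weight, \(\C^\perp\) has no weight-\(3\) words, and \(\C\) is the \([4,3,2]\) even-weight code with every coordinate doubled. The only admissible view at \(i\) is then \(\{i,i'\}\), where \(i'\) is its twin, so no second disjoint helper set exists.
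The binary witness must therefore contain weight-\(6\) words. The paper uses the columns \((001),(001),(010),(010),(100),(101),(110),(111)\), with weight enumerator \(1+5z^4+2z^6\). Its weight-\(3\) parities \(\{1,5,6\},\{2,5,6\},\{1,7,8\},\{2,7,8\},\{3,5,7\},\{4,5,7\},\{3,6,8\},\{4,6,8\}\) give two disjoint helper pairs at every coordinate.

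\textbf{Ternary case.} Weight-\(3\) parities on all seven Fano lines force the columns, viewed as points of \(\mathrm{PG}(2,3)\), to satisfy every Fano collinearity. In odd characteristic the diagonal points of a complete quadrangle are not collinear, so no injective realization exists. Every degenerate realization either places at least five columns on one line or has a zero column that forces the remaining coordinates into proportional pairs; in both cases \(d\le2\). Your coefficient search therefore returns nothing.

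The route is repaired by keeping only six lines. The seven nonzero binary columns read over \(\F_3\) form a quadrangle together with its three diagonal points. The six sides give \([3,2,2]_3\) local views, and each diagonal point lies on two sides meeting only there. No line of \(\mathrm{PG}(2,3)\) carries more than three of these columns, so \(d=4\). Alternatively, use the paper's columns \((001),(001),(010),(011),(100),(101),(111)\).

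Your \(q=4\) route, with the Fano columns over \(\F_4\) and all coefficients \(1\), coincides with the paper's and is fine.
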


The certified LP values and matching construction sizes are summarized in
Table~\ref{tab:multiblock-headlines}.

\begin{table}[htbp]
\centering
\caption{Certified availability-two bounds and matching constructions.}
\label{tab:multiblock-headlines}
\small
\setlength{\tabcolsep}{5pt}
\begin{tabular}{@{}lrrrr@{}}
\toprule
\((q,n,d,r,\delta,a)\) & \(U_{\mathrm{3blk}}\) &
\(U_{\Cond}\) & \(U_{\FB}\) & Witness\\
\midrule
\((2,8,4,2,2,2)\) & \(10\) & \(3508/337\) & \(8\) & \(8\)\\
\((3,7,3,2,2,2)\) & \(8991/137\) & \(2817/71\) & \(27\) & \(27\)\\
\((4,7,3,2,2,2)\) & \(4864/23\) & \(112\) & \(64\) & \(64\)\\
\bottomrule
\end{tabular}
\end{table}

\begin{proof}
The exact four-block dual certificates give the three upper bounds
\(8\), \(27\), and \(64\). For the matching lower bounds, in the binary
case take the linear code whose generator matrix has columns
\[
 (001),(001),(010),(010),(100),(101),(110),(111).
\]
For the ternary case use
\[
 (001),(001),(010),(011),(100),(101),(111).
\]
For the quaternary case, regard the seven Fano columns
\[
 (001),(010),(011),(100),(101),(110),(111)
\]
as columns over \(\F_4\). Every Fano line gives a length-three local
projection of distance two, and two distinct lines through a recovered
point give disjoint helper pairs.

Their row spaces have sizes \(8\), \(27\), and \(64\), respectively, and
minimum distances \(4\), \(3\), and \(4\). An independent exact checker
records two disjoint admissible helper sets at every coordinate, forms the
induced \(Z_{t_1,t_2;j,k,c,b}\)-array, and verifies the centered, spectral,
and collision constraints. Thus these constructions attain the certified
four-block upper bounds.
\end{proof}

In particular,
\[
 U_{\FB}<\min\{U_{\Cond},U_{\mathrm{3blk}}\}
\]
in all three rows. Hence the four-block LP is strictly stronger than
either coarsening in these examples, rather than merely a different
parameterization of the same relaxation.

The reproducibility repository contains implementation-independent builders
and checkers for the displayed LPs, exact rational primal--dual certificates
for the certified entries, and independent construction checks for
Theorem~\ref{thm:exact-availability-two}.

\subsection{Broader numerical landscape}
\label{subsec:numerical-landscape}

The exact rows above do not indicate how often the outside spectrum changes
the conditioned relaxation. In the exploratory scan underlying this paper, the numerical four-block
optimum is strictly smaller than the conditioned optimum in all 20 tested
rows. Table~\ref{tab:multiblock-numerical}
shows representative cases across three alphabet sizes, including the two
currently open rows. These values are floating-point HiGHS outputs, not
certified upper bounds. Conditioned values are rounded to four decimal places;
four-block outputs numerically equal to an integer are displayed as that
integer.

\begin{table}[htbp]
\centering
\caption{Representative availability-two computations, with \(a=2\)
throughout. A match is numerical evidence of sharpness, not a theorem.}
\label{tab:multiblock-numerical}
\small
\setlength{\tabcolsep}{4pt}
\begin{tabular}{@{}lrrrl@{}}
\toprule
\((q,n,d,r,\delta)\) & \(U_{\Cond}^{\rm num}\) &
\(U_{\FB}^{\rm num}\) & Recorded size & Status\\
\midrule
\((2,7,2,2,2)\)  & 9.7391   & 8  & 8  & match\\
\((2,9,4,3,2)\)  & 22.1277  & 16 & 16 & match\\
\((2,13,5,2,2)\) & 32.7805  & 32 & --- & open\\
\((2,12,5,3,2)\) & 39.6800  & 32 & --- & open\\
\((3,8,4,2,2)\)  & 41.2105  & 27 & 27 & match\\
\((4,8,4,2,2)\)  & 120.2892 & 64 & 64 & match\\
\bottomrule
\end{tabular}
\end{table}

The complete exploratory scan is stored in the reproducibility repository.
These values are numerical only; promotion to a theorem requires an exact
rational primal--dual certificate and an independently checked construction.

\section{Conclusion}
\label{sec:limitations}

The multi-view construction organizes locality information by retained order
\(s\). At \(s=1\) it is the three-block model of \cite{paperone2026}; at
\(s=2\) it becomes the four-block model developed here; taking \(s=a\)
retains all available repair views. The exact binary, ternary, and quaternary
examples in Theorem~\ref{thm:exact-availability-two} show that, already at
\(s=2\), both sources of information loss can matter: the one-view
projection forgets one helper block as a separate local view, while the
conditioned projection forgets the positive outside spectrum.

The detailed development focuses on \(s=2\) for exposition and
computation. Larger retained order gives a valid
polynomial-size hierarchy for fixed \(s\), but with a rapidly growing local
state space and a larger family of collision constraints. Determining which
of those higher-order constraints are computationally effective is a natural
next step.

The hierarchy remains a two-codeword relaxation. It does not retain
intersections between repair sets attached to different recovered coordinates,
coordinate labels within a helper set, or compatibility among local
configurations at different coordinates. Retaining such data may strengthen
the bound, but it leads to a different and substantially larger optimization
problem.

\backmatter

\subsection*{Funding}
The work of Ming-Hsuan Kang was supported by the National Science and
Technology Council, Taiwan, under Grant NSTC 115-2115-M-A49-010.  The work of
Maosheng Xiong was supported by the Research Grants Council (RGC) of Hong Kong
under Grant No.~16307524.

\section*{Data and code availability}
The reproducibility repository contains the LP builders, exact rational
primal--dual certificates, construction witnesses and independent checkers,
regression tests, and the complete exploratory scans used in this paper.
Theorem-level certificates are separated from floating-point discovery
computations.

\setlength{\bibsep}{2pt plus 0.3ex}


\end{document}